\title{On polynomial growth functions of D0L-systems}
\author{Fran\c{c}ois Nicolas \and Julien Cassaigne}
\newcommand{\N}{\mathbb{N}} 
\newcommand{\R}{\mathbb{R}}
\newcommand{\lgr}[1]{\left| #1 \right|}
\newcommand{\seg}[2]{\left[  #1, #2 \right]}
\newcommand{\defeq}{\mathrel{\mathop:}=}
\newcommand{\mv}{\varepsilon}
 \newcommand{\ttx}{\mathtt{x}}
\newcommand{\ze}{\mathtt{0}}
\newcommand{\on}{\mathtt{1}}
\newcommand{\tw}{\mathtt{2}}
\newcommand{\thr}{\mathtt{3}}
\newcommand{\tta}{\mathtt{a}}
\newcommand{\ttb}{\mathtt{b}}
\newcommand{\tte}{\mathtt{e}}
\newcommand{\calf}{\mathcal{F}}
\newtheorem{lemma}{Lemma}
\newtheorem{theorem}{Theorem}
\newtheorem{definition}{Definition}
\newtheorem{exercise}{Exercise}
\begin{document}

\maketitle

\sloppy 

\begin{abstract}
The aim of this paper is to prove that every polynomial function that maps the natural integers to the positive integers is the growth function of some D0L-system.
\end{abstract}

\section{Introduction}

As usual,
 $\N$ and $\R$ denote the semiring of natural integers and the field of real numbers, respectively.
For every $a$, $b \in \N$, 
$\seg{a}{b}$ denotes the set of all $n\in \N$ such that $a \le n \le b$.
The letter $\ttx$ denotes the formal variable for the polynomials.

A \emph{word} is a finite string of symbols. 
Word concatenation is denoted multiplicatively.
For every word $w$, the \emph{length} of $w$ is denoted  $\lgr{w}$. 
The word of length zero, denoted $\mv$,  is called the \emph{empty word}.

An \emph{alphabet} is a finite set of symbols.
Let $A$ be an alphabet. 
The set of all words over $A$ is denoted $A^\star$.
A mapping $\sigma\colon A^\star \to A^\star$ is called a \emph{morphism} if $\sigma(xy) = \sigma(x) \sigma(y)$ for all $x$, $y \in A^\star$.
Clearly, 
$\sigma$ is completely determined  by its restriction to $A$.
For every $n \in \N$, $\sigma^n$ denotes the $n^\text{th}$ iterate of $\sigma$: 
for every $w \in A^\star$, 
$\sigma^0(w) = w$, 
$\sigma^1(w) = \sigma(w)$,
$\sigma^2(w) = \sigma(\sigma(w))$, 
$\sigma^3(w) = \sigma(\sigma(\sigma(w)))$, \emph{etc}.

A \emph{D0L-system}  is a triple $(A, \sigma, w)$, 
 where $A$ is an alphabet, $\sigma$ is a morphism from $A^\star$ to itself, 
and $w$ is a word over $A$.  
The \emph{growth function} of the D0L-system $(A, \sigma, w)$ is defined as the function mapping each $n \in \N$ to the length of $\sigma^n(w)$. 
By extension, we say that  a (formal) real polynomial $F(\ttx)$ is the growth function of $(A, \sigma, w)$ if  $\lgr{\sigma^n(w)} = F(n)$ for every $n \in \N$.
We say that a function $f\colon \N \to \R$ (resp.~a real polynomial $F(\ttx)$)  is a \emph{D0L-growth function} if $f$ (resp.~$F(\ttx)$) is the growth function of some D0L-system.

\begin{definition}
Define  $\calf$ as the set of all polynomials $F(\ttx)$ such that $F(n)$ is a positive integer for every $n \in \N$. 
\end{definition}

Every polynomial with non-negative integer coefficients and non-zero constant term belongs to $\calf$:  
for instance, $\ttx^d + 1 \in \calf$ for every $d \in \N$.
Furthermore,
 some polynomials that admit negative and/or non-integer coefficients also belong to $\calf$: 
for each integer $d \ge 1$, 
$\ttx^d - \ttx^{d - 1} + 1 \in \calf$ 
and
$\frac{1}{d!} \prod_{k = 1}^d (\ttx + k) \in \calf$.

\begin{exercise}
 Prove that every polynomial in $\calf$ has rational coefficients (hint: use Lagrange polynomials).
\end{exercise}

\section{Result}

The aim of the paper is to prove and exemplify the following result:

\begin{theorem} \label{th:D0L-polynomial}
Every polynomial in $\calf$ is a D0L-growth function.
 \end{theorem}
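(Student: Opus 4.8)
The plan is to prove the theorem by induction on the degree $d$ of the polynomial $F \in \calf$, using a small number of elementary constructions on D0L-systems; I will use freely that a D0L-system may contain \emph{erasing} letters, i.e.\ letters $z$ with $\sigma(z) = \mv$. For the base case $d = 0$, a positive integer constant $c$ is the growth function of $(\{\tta\}, \sigma, \tta^c)$ where $\sigma(\tta) = \tta$, since $\lgr{\sigma^n(\tta^c)} = c$ for every $n$. For the inductive step I would like to descend to the discrete derivative $\Delta F(\ttx) \defeq F(\ttx + 1) - F(\ttx)$, which has degree $d - 1$. The difficulty is that $\Delta F$ need not lie in $\calf$: where $F$ stays flat or dips, $\Delta F$ vanishes or turns negative. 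Since $F$ has positive leading coefficient (its values are $\ge 1$), $\Delta F$ is eventually positive, so I will first replace $F$ by a sufficiently far forward shift.

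The engine of the induction is an \emph{integration} construction. Suppose $G \in \calf$ is strictly increasing on $\N$; then the values of $\Delta G$ are positive integers, so $\Delta G \in \calf$, and $\deg \Delta G = d - 1$, whence by the induction hypothesis $\Delta G$ is the growth function of some system $(B, \tau, v)$. I then realize $G$ by adjoining to $B$ a fresh letter $t$ with $\sigma(t) = t\,v$ and a fresh \emph{inert} letter $z$ with $\sigma(z) = z$, keeping $\sigma = \tau$ on $B$, and starting from $t\,z^{G(0) - 1}$. A direct computation gives $\lgr{\sigma^n(t)} = 1 + \sum_{k=0}^{n-1} \lgr{\tau^k(v)} = 1 + \sum_{k=0}^{n-1} \Delta G(k) = 1 + G(n) - G(0)$, so that $\lgr{\sigma^n(t\,z^{G(0) - 1})} = G(n)$ for every $n \in \N$; here the inert padding supplies the missing constant $G(0) - 1$ at every step. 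Thus any strictly increasing $G \in \calf$ is a D0L-growth function.

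It remains to reinstate the possibly non-monotone initial behaviour of $F$, and this is where the main obstacle lies: I must force \emph{exact} equality at the small values of $n$, not merely agreement for large $n$. I fix $j \in \N$ with $G \defeq F(\ttx + j)$ strictly increasing on $\N$ and realize $G$ by the construction above. To climb back from $F(\ttx + 1)$ to $F$ I use a \emph{backward-shift} construction: given a system $(B, \tau, v)$ realizing $F(\ttx + 1)$, I adjoin a fresh letter $s$ with $\sigma(s) = v$ together with $F(0) - 1$ \emph{erasing} letters, and start from $s\,z^{F(0) - 1}$ with $\sigma(z) = \mv$. The erasing letters contribute length only at $n = 0$, giving $\lgr{\sigma^0} = F(0)$, while for $n \ge 1$ one has $\sigma^n(s) = \tau^{n-1}(v)$, so the length is $F(n)$; hence $F$ is realized exactly, with the initial value set correctly and the tail undisturbed. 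Applying this step $j$ times, successively producing systems for $F(\ttx + j - 1), F(\ttx + j - 2), \dots, F(\ttx)$ — each of which lies in $\calf$ and so has a positive integer value at $0$ — descends from $G = F(\ttx + j)$ to $F$ and completes the induction. The crux throughout is this interplay: shift forward far enough that the discrete derivative becomes genuinely positive so the induction hypothesis applies, then shift back using erasing letters to restore the exact initial segment of $F$ without perturbing its long-run growth.
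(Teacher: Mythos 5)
Your proof is correct and takes essentially the same route as the paper: induction on the degree, a forward shift justified by the eventual positivity of the discrete derivative, an integration gadget ($\sigma(t) = tv$ with inert padding, which is exactly the paper's Lemma~\ref{lem:partialf}), and a backward-shift gadget with erasing letters (exactly the paper's Lemma~\ref{lem:shift-D0L}). The differences are purely cosmetic, e.g.\ you phrase the shift condition as strict monotonicity of $F(\ttx + j)$ on $\N$ rather than as membership of $\partial F(\ttx + j)$ in $\calf$.
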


Clearly, every D0L-growth function is non-negative valued.
Moreover, if a D0L-growth function vanishes then it is eventually zero because every morphism maps the empty word to itself.
Therefore, if a non-zero polynomial is a D0L-growth function then it belongs to $\calf$: Theorem~\ref{th:D0L-polynomial} is optimal.

\begin{definition} \label{def:ev-pos}
We say that a real polynomial $F(\ttx)$ is \emph{eventually positive} if it satisfies the following four equivalent conditions.
\begin{enumerate}
	\item The set $\left\{ t \in \R : F(t) \le 0 \right\}$ is bounded from above.
	\item The set $\left\{ t \in \R : F(t) > 0 \right\}$  is not bounded from above.
	\item $F(\ttx)$ is non-zero and its leading coefficient is positive.
	\item Either $F(\ttx)$ is a positive constant or  $\lim\limits_{t \to + \infty} F(t) = + \infty$.
\end{enumerate}
\end{definition}

\begin{exercise}
 Check that conditions~1, 2, 3 and 4 of Definition~\ref{def:ev-pos} are equivalent.
\end{exercise}

Condition~1 is the natural definition of the notion: 
given an upper-bound $t_0$ of $\left\{ t \in \R : F(t) \le 0 \right\}$,
$F(t)$ is positive for every real number  $t > t_0$.
Condition~2 ensures that every polynomial in $\calf$ is eventually positive: for every  $F(\ttx) \in \calf$, $\N$ is a subset of $\left\{ t \in \R : F(t) > 0 \right\}$.
Condition~3 is used in the proof of Lemma~\ref{lem:Fx-1-Fx} below.
Condition~4 is given for the sake of completeness.

\begin{definition}
For each polynomial $F(\ttx)$, define $\partial F(\ttx) \defeq F(\ttx + 1) - F(\ttx)$.
For each function $f\colon \N \to \R$, 
define $\partial f\colon \N \to \R$ by:
$\partial f(n) \defeq f(n + 1) - f(n)$ for each $n \in \N$.
\end{definition}

Using a telescoping sum we obtain that
\begin{equation} \label{eq:f-n-g}
f(n) = f(0) + \sum_{m = 0}^{n - 1} \partial f(m)
\end{equation}
for any function $f\colon \N \to \R$ and any $n \in \N$.

\begin{lemma} \label{lem:Fx-1-Fx}
Let $F(\ttx)$ be a non-constant real polynomial.

\begin{enumerate}
\item The degree of $\partial F(\ttx)$ is one less than the degree of $F(\ttx)$.
\item $F(\ttx)$ is eventually positive if, and only if, $\partial F(\ttx)$ is eventually positive.
\end{enumerate}
\end{lemma}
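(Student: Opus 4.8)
The plan is to prove both parts by writing $F(\ttx)$ explicitly in terms of its coefficients and computing $\partial F(\ttx)$ directly. Suppose $F(\ttx) = \sum_{k=0}^d c_k \ttx^k$ with $d \ge 1$ and leading coefficient $c_d \neq 0$. The key observation is that the binomial expansion gives
\begin{equation*}
\partial F(\ttx) = F(\ttx+1) - F(\ttx) = \sum_{k=0}^d c_k \left[ (\ttx+1)^k - \ttx^k \right],
\end{equation*}
and for each $k \ge 1$ the bracketed term $(\ttx+1)^k - \ttx^k$ is a polynomial of degree exactly $k-1$ whose leading term is $k \ttx^{k-1}$ (the $\ttx^k$ terms cancel), while for $k = 0$ the bracket vanishes.

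For part~1, I would collect the contributions to the top degree. Only the $k = d$ summand produces a term of degree $d-1$, and its leading coefficient is $d \, c_d$; all summands with $k < d$ contribute only to lower degrees. Since $d \ge 1$ and $c_d \neq 0$, we have $d \, c_d \neq 0$, so $\partial F(\ttx)$ has degree exactly $d - 1$, one less than the degree of $F(\ttx)$. This settles the first claim.

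For part~2, I would invoke condition~3 of Definition~\ref{def:ev-pos}, which characterizes eventual positivity through the sign of the leading coefficient. By part~1, the leading coefficient of $\partial F(\ttx)$ is $d \, c_d$, and since $d \ge 1$ is positive, $d \, c_d$ has the same sign as $c_d$. Thus $\partial F(\ttx)$ is non-zero with positive leading coefficient if and only if $F(\ttx)$ is, which is exactly the biconditional asserted. One subtlety to handle cleanly is that $\partial F(\ttx)$ might itself be constant (when $d = 1$): in that case its degree is $0$ and its "leading coefficient" is its constant value $d\,c_d = c_1$, so condition~3 still applies verbatim and the equivalence goes through.

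The main obstacle here is really just bookkeeping rather than conceptual difficulty: the crux is verifying that the top-degree terms cancel in exactly the right way so that differencing drops the degree by precisely one and preserves the leading coefficient up to the positive factor $d$. Everything then reduces to the algebraic characterization of eventual positivity already provided by condition~3, so no analytic argument about limits is needed.
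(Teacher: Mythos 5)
Your proof is correct and follows essentially the same route as the paper's: expand $F(\ttx)$ in the monomial basis, use that $(\ttx+1)^k - \ttx^k$ has degree $k-1$ with leading coefficient $k$, conclude that $\partial F(\ttx)$ has degree $d-1$ and leading coefficient $d$ times that of $F(\ttx)$, and deduce part~2 from condition~3 of Definition~\ref{def:ev-pos}. Your explicit treatment of the case where $\partial F(\ttx)$ is constant is a small point of care that the paper leaves implicit, but the argument is identical in substance.
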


\begin{proof}
Let $d$ denote the degree of $F(\ttx)$ and let $f_0$, $f_1$, $f_2$, \ldots, $f_d \in \R$ be such that:
$$
F(\ttx) = \sum_{k = 0}^d f_k \ttx^k \, .
$$
Clearly, $\partial F(\ttx)$ can be written in the form 
$$
\partial F(\ttx) =  \sum_{k = 1}^d  \left( {(\ttx + 1)}^k - \ttx^k  \right) f_k \, .
$$
Now,
remark that for each integer $k \ge 1$, 
the polynomial ${(\ttx + 1)}^k - \ttx^k$ is of degree $k - 1$.
Hence,  $\partial F(\ttx)$ is of degree $d - 1$.
Furthermore, the leading coefficient of ${(\ttx + 1)}^d - \ttx^d$ equals $d$, and thus the leading coefficient of  $\partial F(\ttx)$ equals $d f_d$: 
$d$ times the leading coefficient of $F(\ttx)$. 
\end{proof}

\begin{exercise} \label{exo:integer-valued}
A polynomial $F(\ttx)$ is called \emph{integer-valued} if $F(n)$ is a rational integer for every rational integer $n$.
Let $F(\ttx)$ be a non-zero real polynomial and let $d$ denote the degree of $F(\ttx)$.
Prove that $F(\ttx)$ is integer-valued if, and only if, $F(n)$ is a rational integer for every $n \in \seg{0}{d}$.
\end{exercise}

It follows from Exercise~\ref{exo:integer-valued} that every polynomial in $\calf$ is integer-valued: 
for every $F(\ttx) \in \calf$ and every $n \in \N$, $F(- n)$ is an integer. 

\begin{lemma} \label{lem:shift-D0L}
Let $f$, $g\colon \N \to \R$ be such that $f(0)$ is a positive integer and $g(n) = f(n + 1)$ for every $n \in \N$.
Then, 
 $g$  is a D0L-growth function if, and only if, $f$ is a D0L-growth function.
\end{lemma}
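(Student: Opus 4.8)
The plan is to prove the two implications separately. The forward implication is immediate, while the backward one carries all the content, and it is there that the hypothesis $f(0) \in \Nast$ is used.

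For the forward direction I would suppose that $f$ is the growth function of a D0L-system $(A, \sigma, w)$ and simply advance the seed by one step, claiming that $g$ is the growth function of $(A, \sigma, \sigma(w))$. Indeed, for every $n \in \N$,
\[
\lgr{\sigma^n(\sigma(w))} = \lgr{\sigma^{n + 1}(w)} = f(n + 1) = g(n) \, .
\]
Note that this half does not use the assumption on $f(0)$.

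For the backward direction I would start from a D0L-system $(B, \tau, u)$ whose growth function is $g$, so that $\lgr{u} = g(0) = f(1)$. The idea is to \emph{prepend} a single morphism step: I want a seed of length $f(0)$ that is sent to $u$ after one application of the morphism, while the morphism otherwise behaves exactly like $\tau$. To arrange this I would pick two symbols $c$, $d$ not belonging to $B$, set $A \defeq B \cup \{ c, d \}$, and define a morphism $\sigma : A^\star \to A^\star$ by $\sigma(b) \defeq \tau(b)$ for every $b \in B$, together with $\sigma(c) \defeq u$ and $\sigma(d) \defeq \mv$. As seed I would take $w \defeq c \, d^{f(0) - 1}$, which is well defined and of length exactly $f(0)$ \emph{precisely because} $f(0)$ is a positive integer. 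Then $\lgr{\sigma^0(w)} = \lgr{w} = f(0)$ and $\sigma(w) = u$, so for every $n \ge 1$ the fresh letters have already vanished and $\sigma^n(w) = \sigma^{n - 1}(u) = \tau^{n - 1}(u)$, since $\sigma$ agrees with $\tau$ on $B^\star$; hence $\lgr{\sigma^n(w)} = g(n - 1) = f(n)$. This shows that $f$ is the growth function of $(A, \sigma, w)$.

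I expect the only delicate point to be the construction of the seed in the backward direction: it must have length exactly $f(0)$ and yet collapse to $u$ under one application of $\sigma$. Splitting the roles between a letter $c$ that carries $u$ and an erasable letter $d$ that pads the length to $f(0)$ resolves this cleanly, after which verifying that $\sigma$ restricts to $\tau$ on $B^\star$ (so that the iterates coincide with those of the original system) is routine. The positivity of $f(0)$ is exactly what is needed here: were $f(0)$ equal to $0$, the seed would be the empty word, forcing $f$, and therefore $g$, to vanish identically, which is precisely the obstruction the hypothesis removes.
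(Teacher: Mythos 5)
Your proof is correct and is essentially identical to the paper's own: the forward direction advances the seed by one application of the morphism, and the backward direction uses the same two-letter gadget (one fresh letter mapped to the old seed $u$, one fresh erasable letter padding the seed to length $f(0)$), yielding $\sigma^n(w) = \tau^{n-1}(u)$ for $n \ge 1$. The only differences are cosmetic: the names of the fresh letters and the order of the padding within the seed word.
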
 

\begin{proof}
If $f$ is the  growth function of some D0L-system $(A, \sigma, x)$ then $g$ is the growth function of the D0L-system $(A, \sigma, \sigma(x))$.

Conversely, assume that $g$ is the growth function of some D0L-system $(B, \tau, y)$.
Let $a$ and $c$  be two letters such that $a \ne c$, $a \notin B$ and $c \notin B$.
Let $A \defeq B \cup \{ a, c \}$, 
let $x \defeq  c^{f(0) - 1} a$, and 
let $\sigma\colon A^\star \to A^\star$ be the morphism defined by:
$\sigma(a) \defeq  y$, 
$\sigma(b) \defeq \tau(b)$ for every $b \in B$, and 
$\sigma(c) \defeq \mv$.
It is easy to see that $f$ is the growth function of the D0L-system $(A, \sigma, x)$:
$\sigma^n(x) = \sigma^{n - 1}(y) = \tau^{n - 1}(y)$ for every integer $n \ge 1$.
\end{proof}

\begin{lemma} \label{lem:partialf}
Let $f\colon \N \to \R$ be such that $f(0)$ is a positive integer.
If $\partial f$ is a D0L-growth function then $f$ is also a D0L-growth function.
\end{lemma}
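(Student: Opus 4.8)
The plan is to exploit the telescoping identity~\eqref{eq:f-n-g}, which rewrites $f(n) = f(0) + \sum_{m = 0}^{n - 1} \partial f(m)$, together with a D0L-system $(B, \tau, y)$ realizing $\partial f$ (so that $\partial f(m) = \lgr{\tau^m(y)}$ for every $m \in \N$). Since $f(0)$ is a positive integer and each $\partial f(m)$ is a non-negative integer, $f$ automatically takes positive integer values, which is a necessary sanity check. The goal is then to design a morphism whose $n$-th iterate, applied to a suitable seed, stockpiles the words $y, \tau(y), \tau^2(y), \ldots, \tau^{n - 1}(y)$ side by side, so that its length is exactly the partial sum above.

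The central device is an accumulator letter. First I would pick two fresh, distinct letters $a$ and $c$ not in $B$, set $A \defeq B \cup \{ a, c \}$, and define a morphism $\sigma : A^\star \to A^\star$ by $\sigma(b) \defeq \tau(b)$ for $b \in B$, $\sigma(a) \defeq a y$, and $\sigma(c) \defeq c$. The key computation is the closed form
\[
\sigma^n(a) = a \, y \, \tau(y) \, \tau^2(y) \cdots \tau^{n - 1}(y) \, ,
\]
proved by induction on $n$: applying $\sigma$ to $\sigma^{n - 1}(a)$ turns the leading $a$ into $a y$, while the remaining factors lie in $B^\star$, on which $\sigma$ coincides with $\tau$ (two morphisms agreeing on $B$ agree on all of $B^\star$), so each $\tau^m(y)$ is promoted to $\tau^{m + 1}(y)$. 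Hence $\lgr{\sigma^n(a)} = 1 + \sum_{m = 0}^{n - 1} \lgr{\tau^m(y)} = 1 + \sum_{m = 0}^{n - 1} \partial f(m)$.

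It remains to absorb the offset $f(0) - 1$, which I would handle with the same padding trick used in the proof of Lemma~\ref{lem:shift-D0L}: the letter $c$ satisfies $\sigma^n(c) = c$ for all $n$, so prepending $f(0) - 1$ copies of $c$ contributes a constant length $f(0) - 1$ at every step. Taking the seed $x \defeq c^{f(0) - 1} a$ (a legitimate word precisely because $f(0) - 1 \ge 0$) then gives $\lgr{\sigma^n(x)} = (f(0) - 1) + 1 + \sum_{m = 0}^{n - 1} \partial f(m) = f(n)$ by~\eqref{eq:f-n-g}, so $f$ is the growth function of $(A, \sigma, x)$. The one step demanding care is the inductive verification of the closed form for $\sigma^n(a)$, in particular the bookkeeping that $\sigma$ acts as $\tau$ on every factor except the head letter $a$; everything else is routine length counting. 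The only real conceptual obstacle is hitting upon the rule $\sigma(a) = a y$, which is exactly what makes iteration carry out the summation in~\eqref{eq:f-n-g} on its own.
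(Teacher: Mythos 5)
Your proof is correct and is essentially identical to the paper's own argument: the same fresh letters $a$ and $c$, the same morphism with $\sigma(a) = ay$, $\sigma(c) = c$, $\sigma(b) = \tau(b)$ on $B$, the same seed $c^{f(0)-1}a$, and the same length count via the telescoping identity~\eqref{eq:f-n-g}. The only difference is that you spell out the induction establishing $\sigma^n(a) = a\,y\,\tau(y)\dotsm\tau^{n-1}(y)$, which the paper leaves as ``easy to see.''
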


\begin{proof}
Assume that $\partial f$ is the growth function of some D0L-system $(B, \tau, y)$. 
Let $a$ and $c$  be two letters such that $a \ne c$, $a \notin B$ and $c \notin B$.
Let $A \defeq B \cup \{ a, c \}$, let $x \defeq  c^{f(0) - 1} a$ and let $\sigma\colon A^\star \to A^\star$ be the morphism defined by:
$\sigma(a) \defeq  a y$, 
$\sigma(b) \defeq \tau(b)$ for every $b \in B$, and 
$\sigma(c) \defeq c$.
It is easy to see that $\sigma^n(a) = a y \tau(y) \tau^2(y) \tau^3(y) \dotsm \tau^{n - 1}(y)$ and that  
\begin{equation} \label{eq:sigma-n-a}
\sigma^n(x) = c^{f(0) - 1} a y \tau(y) \tau^2(y) \tau^3(y) \dotsm \tau^{n - 1}(y) 
\end{equation}
for every $n \in \N$.
Now, remark that the right-hand side of Equation~\eqref{eq:f-n-g} is exactly the length of the word on the right-hand side of Equation~\eqref{eq:sigma-n-a}: 
$f$ is the growth function of the D0L-system $(A, \sigma, x)$.
\end{proof}

We can now prove the main result of the paper.

\begin{proof}[Proof of Theorem~\ref{th:D0L-polynomial}]
We proceed by induction on the degree of $F(\ttx)$. 

If $F(\ttx)$ has degree zero then $F(\ttx)$ is identically equal to $F(0)$, 
and thus $F(\ttx)$ is the growth function of the D0L-system $(A, \sigma, x)$ where 
$A \defeq \{ \tta \}$, 
$\sigma$ is the identity function on ${\{ \tta \}}^\star$, and 
$x \defeq \tta^{F(0)}$.

Let us now assume that the degree of $F(\ttx)$, denoted $d$, is positive.
Clearly,  $\partial F(n)$ is an integer for every $n \in \N$.
Moreover, Lemma~\ref{lem:Fx-1-Fx} ensures that  $\partial F(\ttx)$ is of degree $d - 1$ and  eventually positive.
Therefore,
there exists  $n_0 \in \N$ such that $\partial F(\ttx + n_0) \in \calf$, 
and it follows from the induction hypothesis that $\partial F(\ttx + n_0)$ is a D0L-growth function. 
Then, Lemma~\ref{lem:partialf} ensures that $F(\ttx + n_0)$ is also a D0L-growth function.
Repeatedly applying Lemma~\ref{lem:shift-D0L}, we obtain that 
$F(\ttx + k)$ is a D0L-growth function for $k = n_0 - 1$, $n_0 - 2$, $n_0 - 3$, \ldots, $1$, $0$.
In particular, $F(\ttx)$ is a D0L-growth function.
\end{proof}

\section{Effectivity and examples}

Let $F(\ttx) \in \calf$. 
Summarizing the previous section, we present a simple algorithm that computes a D0L-system with growth function  $F(\ttx)$.

The degree of $F(\ttx)$ is denoted $d$.
For each $i \in \N$, 
$\partial^i$ is understood as the $i^\text{th}$ iterate of the operator $\partial$: 
for every polynomial $F(\ttx)$, 
$\partial^0 F(\ttx) = F(\ttx)$, 
$\partial^1 F(\ttx) = \partial F(\ttx)$,
$\partial^2 F(\ttx)  = F(\ttx + 2) -  2 F(\ttx + 1) + F(\ttx)$,
\emph{etc}.

\begin{exercise} \label{exo:partial-k-Fx}
Prove that 
$$
\partial^i f(n) = \sum_{j = 0}^i \binom{i}{j} {(-1)}^{i - j} f(n + j) 
$$
for every $i$, $n \in \N$ and every function $f\colon \N \to \R$.
\end{exercise}

It is easy to see that $\partial^i F(n)$ is a rational integer for all $i$, $n \in \N$.

\begin{exercise} \label{exo:partial-zero}
Let $F(\ttx) \in \calf$ and let $d$ denote the degree of $F(\ttx)$.
Prove that if  $\partial^i F(0)$ is positive for every $i \in \seg{0}{d}$ then 
$\partial^i F(\ttx)$ belongs to $\calf$ for every $i \in \seg{0}{d}$.
\end{exercise}

Exercise~\ref{exo:partial-zero} motivates the introduction of the following restricted case of the problem.

\subsection{Restricted case}

For each $i \in \N$,  let $f_i \defeq \partial^i F(0)$.
In this section, we focus on the case where $f_i$ is a positive integer for every $i \in \seg{0}{d}$.
It follows from Lemma~\ref{lem:Fx-1-Fx} that the polynomial $\partial^i F(\ttx)$ has degree $d - i$ for each $i \in \seg{0}{d}$.
In particular, $\partial^d F(\ttx)$ is identically equal to the positive integer $f_d$, and thus $\partial^d F(\ttx)$ is a D0L-growth function.
Then, applying $d$ times Lemma~\ref{lem:partialf}, 
we get that the polynomials 
$\partial^{d - 1} F(\ttx)$,  \ldots, $\partial^{3} F(\ttx)$,  $\partial^{2} F(\ttx)$, $\partial F(\ttx)$, and $F(\ttx)$ are D0L-growth functions.
Hence, to obtain a D0L-system with growth function $F(\ttx)$, it suffices to nest $d$ instances of the gadget described in the proof of Lemma~\ref{lem:partialf}. 
Let $a_0$, $a_1$, $a_2$, \ldots, $a_d$ be $d + 1$ pairwise distinct letters and let $A \defeq \{ a_0, a_1, a_2, \dotsc, a_d \}$. 
For each $i \in \seg{0}{d}$, put $x_i \defeq  a_0^{f_{d - i} - 1} a_i$.
Let $\sigma\colon A^\star \to A^\star$ be the morphism defined by:
$\sigma(a_0) \defeq a_0$ and $\sigma(a_i) \defeq a_i x_{i - 1}$ for each $i \in \seg{1}{d}$.
The polynomial $F(\ttx)$ is the growth function of the D0L-system $(A, \sigma, x_d)$.
More generally, $\partial^{i} F(\ttx)$  is the growth function of the D0L-system $(A, \sigma, x_{d - i})$ for every $i \in \seg{0}{d}$.

For instance, let $F(\ttx) \defeq  \ttx^3 + 1$: 
\begin{itemize} 
 \item 
$d = 3$,  $A =  \{ \ze, \on, \tw, \thr  \}$,
\item 
$\partial F(\ttx) = 3 \ttx^2 + 3 \ttx + 1$,
$\partial  \partial  F(\ttx) = 6 \ttx + 6$,
$\partial \partial  \partial  F(\ttx) = 6$,
\item 
$x_0 = \ze\ze \ze \ze \ze \ze$, 
$x_1 = \ze \ze \ze \ze \ze \on$,
$x_2 = \tw$, 
$x_3 = \thr$,
\item 
$\sigma(\ze) = \ze$,  
$\sigma(\on) = \on \ze\ze\ze\ze\ze\ze$,
$\sigma(\tw) = \tw  \ze\ze\ze\ze\ze \on$, 
$\sigma(\thr) = \thr \tw $, 
\item
$\lgr{\sigma^n(x_0)} = 6$,
$\lgr{\sigma^n(x_1)} = 6n + 6$,
$\lgr{\sigma^n(x_2)} = 3 n^2 + 3n + 1$,
$\lgr{\sigma^n(x_3)} = n^3 + 1$.
\end{itemize}

 \subsection{General case}

In this section, we show how to reduce the general case to the restricted case.
It follows from Lemma~\ref{lem:Fx-1-Fx} that the polynomial $\partial^iF(\ttx)$ is eventually positive for every $i \in \seg{0}{d}$.
Hence, there exists $k \in \N$ such that $\partial^i F(k)$ is positive for every $i \in \seg{0}{d}$.
Note that $k$ is computable by exhaustive search.

\begin{exercise}
Find an algorithm that, for any rational polynomial $F(\ttx)$ taken as input, decides whether $F(\ttx)$ belongs to $\calf$ (hint: use Exercises~\ref{exo:integer-valued} and~\ref{exo:partial-zero}). 
\end{exercise}

 \newcommand{\tildeF}{G} 

The method for the restricted case applies to $\tildeF(\ttx) \defeq F(\ttx + k)$:
we can compute a D0L-system $(B, \tau, y)$ such that $\lgr{\tau^n(y)} = \tildeF(n)$ for every $n \in \N$.
To obtain a D0L-system with growth function $F(\ttx)$, 
it suffices to nest $k$ instances of the gadget described in the proof of Lemma~\ref{lem:shift-D0L}.
Let $e$, $b_1$, $b_2$,  $b_3$, \ldots, $b_{k}$ be $k + 1$   pairwise distinct letters such that $\{ e, b_1, b_2,  b_3, \ldots, b_{k} \} \cap B = \emptyset$,
let $A \defeq \{ e, b_1, b_2,  b_3, \ldots, b_{k} \} \cup B$, and let $\sigma\colon A^\star \to A^\star$ be the morphism defined by:
$\sigma(e) \defeq \mv$, 
$\sigma(b_i) \defeq e^{F(i) - 1} b_{i + 1}$ for each $i \in \seg{1}{k - 1}$,
$\sigma(b_k) \defeq y$,
and 
$\sigma(b) \defeq \tau(b)$ for every $b \in B$.
The polynomial $F(\ttx)$ is the growth function of the D0L-system $(A, \sigma, \tte^{F(0) - 1} b_1)$.

For instance, let $F(\ttx) \defeq {(\ttx - 2)}^2 + 2$.
Remark that $\partial F(0) = - 3$ and $\partial F(1) = - 1$.
However, the polynomial $\tildeF(\ttx) \defeq F(\ttx + 2)$ is such that $\partial^i \tildeF(0) > 0$  for each $i \in \{ 0, 1, 2 \}$:
$\tildeF(\ttx) = \ttx^2 + 2$,
$\partial \tildeF(\ttx) = 2 \ttx + 1$, and 
$\partial  \partial \tildeF(\ttx) = 2$.
Let $A \defeq \{  \tte, \tta, \ttb,  \ze, \on, \tw  \}$ and let $\sigma\colon  {A}^\star \to {A}^\star$ be the morphism defined by:
\begin{itemize}
	\item 
$\sigma(\tte) \defeq \mv$, $\sigma(\tta) \defeq  \tte \tte \ttb$, $\sigma(\ttb) \defeq \ze \tw$,
	\item 
$\sigma(\ze) \defeq \ze$,  
$\sigma(\on) \defeq \on \ze \ze$,
$\sigma(\tw) \defeq \tw \on$.
\end{itemize}
It is easy to see that
$\lgr{\sigma^n(\ze \ze)} = 2$,
$\lgr{\sigma^n(\on)} = 2n + 1$, 
$\lgr{\sigma^n(\ze \tw  )} = n^2 + 2$ 
$\lgr{\sigma^n(\tte \tte \ttb)} = {(n - 1)}^2 + 2$, 
and 
$\lgr{\sigma^n(\tte \tte \tte \tta)} = {(n - 2)}^2 + 2$  for every $n \in \N$.
Hence, the polynomial  $F(\ttx)$  is  the growth function of the D0L-system $(A, \sigma, \tte \tte \tte \tta)$.

\end{document}